\numberwithin{equation}{section} 
 \newtheorem{lemma}{Lemma}[section]
 \newtheorem{theorem}[lemma]{Theorem}
 \newtheorem{claim}[lemma]{Claim}
 \newtheorem{corollary}[lemma]{Corollary}
 \newtheorem{rem}[lemma]{Remark}
\newenvironment{remark}{\begin{rem}}{\hspace*{\fill}$\diamondsuit$\end{rem}}
 \newtheorem{ex}[lemma]{Example}
\begin{document}

\title{Normalized Information Distance is Not Semicomputable}
\author{Sebastiaan A. Terwijn\thanks{SAT is with the Radboud University Nijmegen;
Email terwijn@logic.at}, Leen Torenvliet\thanks{LT is with the University 
of Amsterdam; Email leen@science.uva.nl}, and Paul M.B. Vit\'{a}nyi\thanks{
PMBV is with the CWI and the University of Amsterdam; Email Paul.Vitanyi@cwi.nl}}
\maketitle

\begin{abstract}
Normalized information distance (NID) 
uses the theoretical notion of Kolmogorov complexity, which
for practical purposes is approximated by the length
of the compressed version of the file involved, using
a real-world compression program. This practical application is 
called `normalized compression distance' and it is trivially computable.
It is a parameter-free similarity measure
based on compression, and is used in pattern recognition, data mining,
phylogeny, clustering, and classification.
The complexity properties of its theoretical precursor, the NID, have been
open. 
We show that the NID is neither upper
semicomputable nor lower semicomputable.

{\em Index Terms}---
Normalized information distance, 
Kolmogorov complexity,
semicomputability.
\end{abstract}

\section{Introduction}
\label{sect.intro}

The classical notion of
Kolmogorov complexity \cite{Ko65} is an objective measure
for the information in 
a {\em single} object, and information distance measures the information
between a {\em pair} of objects \cite{BGLVZ98}. This last notion has
spawned research in the theoretical direction, among others
\cite{CMRSV02,VV02,Vy02,Vy03,MV01,SV02}.
Research in the practical direction has focused on
the normalized information distance (NID),
also called the similarity metric, which arises
by normalizing the information distance in a proper manner. 
(The NID is defined by \eqref{eq.nid} below.) If we also
approximate the Kolmogorov complexity through real-world
compressors \cite{Li03,CVW03,CV04}, 
then we obtain the
normalized compression distance (NCD). 
This is a parameter-free, feature-free, and alignment-free
similarity measure  that
has had great
impact in applications.
The NCD was preceded by a related 
nonoptimal distance \cite{LBCKKZ01}.
In \cite{KLRWHL07} another variant of the NCD has been tested
on all major time-sequence databases used in all major data-mining conferences
against all other major methods used. The compression method turned out
to be competitive in general 
and superior in
heterogeneous data clustering and anomaly detection.
There have been many applications in pattern recognition, phylogeny, clustering,
and classification, ranging from
hurricane forecasting and music to
to genomics and analysis of network traffic,
see the many 
papers referencing
\cite{Li03,CVW03,CV04}
in Google Scholar. 
The NCD is trivially computable.
In \cite{Li03} it is shown that its theoretical precursor, the NID, is 
a metric up to negligible discrepancies
in the metric (in)equalities and that it is always between 0 and 1.
(For the subsequent computability notions see
Section~\ref{sect.prel}.)

The computability status of the NID has been open,
see Remark VI.1 in \cite{Li03} which asks whether the NID is 
upper semicomputable, and (open) Exercise 8.4.4 (c) in the textbook
\cite{LiVi08} which asks whether the NID is semicomputable at all. 
We resolve this question by showing the following.
\begin{theorem}
Let $x,y$ be strings and denote the NID between them 
by $e(x,y)$.

(i) The function $e$ is not lower semicomputable 
(Lemma~\ref{lem.lower}).

(ii) The function $e$ is not upper semicomputable 
(Lemma~\ref{lem.upper}).
\end{theorem}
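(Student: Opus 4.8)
The plan is to prove parts (i) and (ii) separately, each by contradiction, and in each case to reduce the semicomputability hypothesis on $e$ to the elementary fact that the Kolmogorov complexity $K$ is not lower semicomputable --- equivalently, that there is no algorithm which on input $m$ outputs a string of complexity at least $m$ (any such string would be computable from $m$, forcing $K(x)\le 2\log m+O(1)<m$ for large $m$). Throughout I abbreviate $E(x,y)=\max\{K(x\mid y),K(y\mid x)\}$, so that $e(x,y)=E(x,y)/\max\{K(x),K(y)\}$, and I fix the reference machine so that $K(z\mid w)\ge 1$ whenever $z\ne w$ (a harmless normalisation).

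For part (ii) I would use a ``twin'' gadget. Let $\bar x=0x$ (any computable injection $x\mapsto\bar x$ with $\bar x\ne x$ and $K(x\mid\bar x),K(\bar x\mid x)=O(1)$ will do). Then $E(x,\bar x)$ lies in a fixed interval $[1,c]$ --- the upper bound is immediate and the lower bound is the normalisation --- while $\max\{K(x),K(\bar x)\}=K(x)+O(1)$ because $x$ and $\bar x$ are computable from each other. Hence
\[
\frac{1}{e(x,\bar x)}=\frac{\max\{K(x),K(\bar x)\}}{E(x,\bar x)}\in\bigl[\,\Omega(K(x)),\ K(x)+O(1)\,\bigr].
\]
Now suppose $e$ were upper semicomputable, witnessed by a computable $\phi(x,y,t)$ nonincreasing in $t$ with $\lim_t\phi(x,y,t)=e(x,y)$ (and positive, which we may assume as $e(x,\bar x)>0$). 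Then $g(x):=\lim_t 1/\phi(x,\bar x,t)=1/e(x,\bar x)$ is lower semicomputable, being an increasing limit of computable rationals; moreover $g(x)\le K(x)+O(1)$, while $g$ is unbounded since $g(x)=\Omega(K(x))$ and $\sup_x K(x)=\infty$. Dovetailing the search ``find $x,t$ with $1/\phi(x,\bar x,t)\ge m+O(1)$'' therefore halts and returns, computably from $m$, a string $x_m$ with $K(x_m)\ge m$; this contradicts the fact quoted above, so $e$ is not upper semicomputable.

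For part (i) the same template does not apply: it would require an effectively enumerable family of pairs on which $e$ is close to $1$ by a small, provably positive and tightly controlled amount, and since $1-e(x,y)$ is essentially a difference of Kolmogorov complexities it is never pinned down that well --- in particular no effective $y=h(x)$ can make $1-e(x,h(x))$ a bounded positive quantity. Instead I would use the ``flattening'' gadget $y=0^{|x|}$: here $K(0^{|x|}\mid x)=O(1)$ and $\max\{K(x),K(0^{|x|})\}=K(x)+O(1)$, so
\[
e(x,0^{|x|})=\frac{K(x\mid|x|)}{K(x)+O(1)} .
\]
Thus $e(x,0^{|x|})$ is close to $1$ exactly for strings that are Kolmogorov-random relative to their length, and $e(x,0^{|x|})\le\tfrac12$ forces $K(x\mid|x|)\le\tfrac12 K(x)+O(1)$, which --- since $|x|$ can contribute at most $K(|x|)=O(\log|x|)$ bits to a description of $x$ --- implies $K(x)=O(\log|x|)$. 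Consequently, if $e$ were lower semicomputable then $B:=\{x:e(x,0^{|x|})>\tfrac12\}$ would be computably enumerable, whereas $B$ contains every string with $K(x)>c\log|x|$ and omits only strings of complexity $O(\log|x|)$; I would then derive a contradiction by using this near-characterisation of incompressibility to produce complex strings effectively, arguing on strings of a fixed length $n$: the complement of $B$ inside $\{0,1\}^n$ is enumerated in the limit and has size at most $n^{O(1)}$, so ``catching it'' (or enough of it) yields strings of length $n$ that are describable within a logarithmic budget yet must contain one of complexity exceeding that budget.

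The step I expect to be the main obstacle is precisely this last one --- making the reduction in part (i) airtight. Choosing the threshold (very likely letting it depend on $n=|x|$) and absorbing the $O(\log)$ slack in the symmetry-of-information estimates, so that ``$e(x,0^{|x|})$ exceeds the threshold'' genuinely tracks a predicate that provably fails to be computably enumerable, is the delicate heart of the argument: the counting must be arranged so that the description budget for the elements of the effectively produced set stays strictly below the complexity threshold it is supposed to violate. By contrast, in part (ii) the only routine point to verify is that the twin gadget has $E(x,\bar x)$ bounded away from $0$, which the normalisation of the reference machine takes care of.
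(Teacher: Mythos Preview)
Your part (ii) is correct and is essentially the paper's argument: the paper uses $e(x,x)=K(x\mid x)/K(x)\asymp 1/K(x)$ directly, and your twin gadget $(x,0x)$ is a harmless variant of the same reduction.

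For part (i) there is a genuine gap, and your own flag on ``the main obstacle'' is accurate: the sketch cannot be closed. From lower semicomputability of $e$ you correctly get that $B=\{x:e(x,0^{|x|})>\tfrac12\}$ is c.e.\ and that $\{0,1\}^n\setminus B$ has size $n^{O(1)}$ and contains only strings with $K(x)=O(\log n)$. But these properties are not contradictory --- the decidable set $\{0,1\}^*\setminus\{0^n:n\ge 0\}$ already enjoys them. Your ``catching'' move (wait until at most $n^c$ strings of length $n$ remain outside the enumeration, then index the rest) does describe each remaining string in $O(\log n)$ bits, but those remaining strings are precisely the true complement (which \emph{is} simple) together with perhaps some late-enumerated members of $B$, and membership in $B$ carries no lower bound on $K(x)$ whatsoever (a random $\log n$-bit prefix padded with zeros lies in $B$ yet has $K=O(\log n)$). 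So nothing forces a high-complexity string into your logarithmic budget, and no choice of $n$-dependent threshold repairs this: structurally, lower semicomputability of $e$ together with upper semicomputability of $E$ only ever yields \emph{upper} bounds on $\max\{K(x),K(y)\}=E/e$, not the lower bounds your reduction requires.

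The paper's proof of (i) is substantially different and supplies the missing idea: time-bounded complexity. Assuming $e$ lower semicomputable, one extracts a total computable time bound $s(n)$ by which the lower approximation to $e$ already dominates $E_s(x,y)/(n+2\log n+c)$ for \emph{every} length-$n$ pair. One then constructs, for this very $s$, a pair $v,w$ with $E(v,w)=O(1)$ but $E^s(v,w)\ge n-O(\log n)$, by taking $v$ random and $w=v\oplus u$ where $u$ is the first length-$n$ string not output by any short program within a suitable $t'(n)\gg s(n)$ steps (so $u$ is unconditionally simple given $n$ but time-$s$ hard given $v$, via time-bounded symmetry of information). This forces $e(v,w)\le c_3/(n-c_1)$ from the true $E$, yet $e(v,w)\ge e_s(v,w)\ge (n-O(\log n))/(n+O(\log n))$ from the time-$s$ estimate --- the contradiction. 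The ``deep pair'' construction is the key ingredient absent from your approach.
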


Item (i) implies that there is no pair of lower semicomputable
functions $g,\delta$ such that
$g(x,y) + \delta(x,y)=e(x,y)$. (If there were such a pair, then $e$ 
itself would be lower semicomputable.) Similarly, Item (ii) implies that
there is no pair of upper semicomputable
functions $g,\delta$ such that
$g(x,y) + \delta(x,y)=e(x,y)$. Therefore,
the theorem implies 
\begin{corollary}
(i) The NID $e(x,y)$ cannot be approximated by
a semicomputable function $g(x,y)$ to any
computable precision $\delta (x,y)$. 

(ii) The NID $e(x,y)$ cannot be approximated by
a computable function $g(x,y)$ to any
semicomputable precision $\delta (x,y)$.
\end{corollary}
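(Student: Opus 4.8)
\noindent\textit{Proof proposal.}
The plan is to derive the corollary directly from the theorem, with no new combinatorics, by using two elementary facts about semicomputability recalled in Section~\ref{sect.prel}: the pointwise sum of two lower (respectively, upper) semicomputable real functions is again lower (respectively, upper) semicomputable, and every computable real function is simultaneously lower and upper semicomputable. Following the discussion that precedes the corollary, I read ``$g$ approximates $e$ to precision $\delta$'' as the requirement $e(x,y)=g(x,y)+\delta(x,y)$ for all strings $x,y$, so that an algorithm for $g$ together with the value of $\delta$ determines $e$.

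For part (i) I would argue by contradiction. Assume $e=g+\delta$ with $g$ semicomputable and $\delta$ computable. Since $\delta$ is computable it is both lower and upper semicomputable, so $e=g+\delta$ inherits whichever one-sided type $g$ has. If $g$ is lower semicomputable then $e$ is lower semicomputable, contradicting item (i) of the theorem (Lemma~\ref{lem.lower}); if $g$ is upper semicomputable then $e$ is upper semicomputable, contradicting item (ii) (Lemma~\ref{lem.upper}). For part (ii) the roles are interchanged: assume $e=g+\delta$ with $g$ computable and $\delta$ semicomputable. Now $g$ is of both one-sided types, so $e$ has the same type as $\delta$, and a lower semicomputable $\delta$ contradicts item (i) while an upper semicomputable $\delta$ contradicts item (ii).

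I do not expect any substantial obstacle in this step: the whole content of the corollary sits in Lemmas~\ref{lem.lower} and~\ref{lem.upper}, and what remains is just bookkeeping with the closure properties above (in particular the observation that the computable member of the pair can be matched to the one-sided type of the semicomputable member). The only point that merits a word of care is the precise meaning of ``approximation to precision $\delta$''. The additive reading $e=g+\delta$ is the one under which the contradiction is immediate, and it is natural here since it says exactly that knowing $g$ algorithmically and $\delta$ exactly pins down $e$; under a two-sided reading $|e-g|\le\delta$ one needs an additional assumption (such as control on the sign of $e-g$) before the same argument applies, so I would keep the statement in the additive form.
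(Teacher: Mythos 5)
Your proposal is correct and matches the paper's own argument: the paper likewise reads the approximation as $g(x,y)+\delta(x,y)=e(x,y)$ and derives the corollary by noting that the sum would inherit the one-sided semicomputability type of the semicomputable member (the computable member being of both types), contradicting Lemmas~\ref{lem.lower} and~\ref{lem.upper}. Your remark about the additive versus two-sided reading of ``precision'' is a fair caveat, and the additive reading is indeed the one the paper uses.
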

How can this be reconciled with the above applicability
of the NCD (an approximation of the NID through real-world compressors)?
It can be speculated upon but not proven that natural data do not contain
complex mathematical regularities such as $\pi=3.1415 \ldots$ or
a universal Turing machine computation. The
regularities they do contain are of the sort detected by a good compressor.
In this view, the Kolmogorov complexity and the length of
the result of a good compressor are not that different for natural data.

\section{Preliminaries}\label{sect.prel}

We write {\em string} to mean a finite binary string,
and $\epsilon$ denotes the empty string.
The {\em length} of a string $x$ (the number of bits in it)
is denoted by $|x|$. Thus,
$|\epsilon| = 0$.
Moreover, we identify strings with natural numbers  
by associating each string with its index
in the length-increasing lexicographic ordering
\[
( \epsilon , 0),  (0,1),  (1,2), (00,3), (01,4), (10,5), (11,6), 
\ldots . 
\]
Informally, the Kolmogorov complexity of a string
is the length of the shortest string from which the original string
can be losslessly reconstructed by an effective
general-purpose computer such as a particular universal Turing machine $U$,
\cite{Ko65}.
Hence it constitutes a lower bound on how far a
lossless compression program can compress.
In this paper we require that the set of programs of $U$ is prefix free
(no program is a proper prefix of another program), that is, we deal
with the {\em prefix Kolmogorov complexity}. 
(But for the results in this paper it does not matter whether we use
the plain Kolmogorov complexity or the prefix Kolmogorov complexity.)
We call $U$ the {\em reference universal Turing machine}.
Formally, the {\em conditional prefix Kolmogorov complexity}
$K(x|y)$ is the length of the shortest input $z$
such that the reference universal Turing machine $U$ on input $z$ with
auxiliary information $y$ outputs $x$. The
{\em unconditional prefix Kolmogorov complexity} $K(x)$ is defined by
$K(x|\epsilon)$.
For an introduction to the definitions and notions
of Kolmogorov complexity (algorithmic information theory)
see ~\cite{LiVi08}.

Let $\cal N$ and $\cal R$ denote
the nonnegative integers and the real numbers,
respectively. 
A function $f: {\cal N} \rightarrow {\cal R}$ is
{\em upper semicomputable} (or $\Pi_1^0$)
if it is defined by a rational-valued computable
function  $\phi (x,k)$ where $x$ is a string
and $k$ is a nonnegative integer
such that $\phi(x,k+1) \leq \phi(x,k)$ for every $k$ and
  $\lim_{k \rightarrow \infty} \phi (x,k)=f(x)$.
This means
  that $f$ can be computably approximated from above.
A function $f$ is
{\em lower semicomputable} (or $\Sigma_1^0$) 
  if $-f$ is upper semicomputable.
 A function is called
{\em semicomputable} (or $\Pi_1^0 \bigcup \Sigma_1^0$)
  if it is either upper semicomputable or lower semicomputable or both.
A function
$f$ is {\em computable}
(or recursive)
  iff it is both upper semicomputable and
lower semicomputable (or $\Pi_1^0 \bigcap \Sigma_1^0$).
Use $ \langle \cdot \rangle$
as a {\em pairing
function}
over ${\cal N}$ to associate a unique natural number $\langle x, y \rangle$
with each pair $(x, y)$ of natural numbers.
An example is $\langle x, y \rangle$ defined by
$y+(x+y+1)(x+y)/2$.
In this way we can extend the above definitions to functions of two
nonnegative integers, in particular to distance functions.

The {\em information distance} $D(x,y)$ between strings $x$ and $y$
is defined as 
\[
D(x,y)= \min_{p} \{|p|: U(p,x)=y \wedge U(p,y)=x \},
\]
where $U$ is the reference universal Turing machine above.
Like the Kolmogorov complexity $K$, the distance function $D$ 
is upper semicomputable. Define
\[E(x,y)= \max \{K(x|y),K(y|x)\}.
\]
In \cite{BGLVZ98} it is shown that
the function $E$ is upper semicomputable, 
$D(x,y)= E(x,y)+O(\log E(x,y))$, the function $E$ is a metric (more precisely,
that it satisfies the metric (in)equalities up to a constant),
and that $E$ is minimal (up to a constant) among all 
upper semicomputable distance functions $D'$ satisfying the mild
normalization conditions $\sum_{y:y \neq x} 2^{-D'(x,y)} \leq 1$ and
$\sum_{x:x \neq y} 2^{-D'(x,y)} \leq 1$. 
(Here and elsewhere in this paper ``$\log$'' denotes the binary logarithm.)
It should be mentioned
 that the minimality property
was relaxed from the $D'$ functions being metrics \cite{BGLVZ98} 
to symmetric distances
\cite{Li03} to the present form \cite{LiVi08} without serious
proof changes.
The {\em normalized information distance} (NID) $e$ is defined by
\begin{equation}\label{eq.nid}
e(x,y) = \frac{E(x,y)}{\max\{K(x),K(y)\}}.
\end{equation}
It is straightforward that $0 \leq e(x,y) \leq 1$ up to some minor
discrepancies for all $x,y \in \{0,1\}^*$.
Since $e$ is the ratio between two upper semicomputable functions,
that is, between two $\Pi_1^0$ functions, it is a $\Delta_2^0$ function.
That is, $e$  is computable relative to the halting problem $\emptyset '$.
One would not expect any better bound in the arithmetic hierarchy.
However, we can say this: Call a function $f(x,y)$
{\em computable in the limit} if there exists
a rational-valued computable function $g(x,y,t)$ such that
 $\lim_{t \rightarrow \infty} g(x,y,t)$ $=f(x,y)$.
This is precisely the class of functions
that are Turing-reducible
to the halting set, and
the  NID
is in this class, Exercise~8.4.4 (b) in \cite{LiVi08} (a result due to
\cite{Ga01}).

In the sequel we use time-bounded Kolmogorov complexity.
Let $x$ be a string of length $n$ and $t(n)$ 
a computable time bound. Then $K^t$ denotes the {\em time-bounded} version
of $K$ defined by
\[
K^t(x|y) = \min_p \{|p|: U'(p,y) = x \; \text{\rm in at most} \; t(n) \; 
\text{\rm steps}\}.
\]
Here we use the two work-tape reference universal Turing machine $U'$
suitable
for time-bounded Kolmogorov complexity \cite{LiVi08}.
The computation of $U'$ is measured in terms of the output rather than
the input, which is more natural in the context of Kolmogorov complexity.

\section{The NID is not lower semicomputable}

Define the
time-bounded version $E^t$ of $E$ by
\begin{equation}\label{eq.Et}
E^t(x,y) = \max \{K^t(x|y), K^t(y|x) \}.
\end{equation}
\begin{lemma}\label{lem.1}
For every length $n$ and computable time bound $t$ there are 
strings $u$ and $v$ of length $n$ such that
\begin{itemize}
\item $K(v) \geq n-c_1$,
\item $K(v|u) \geq n-c_2$,
\item $K(u|n) \leq c_2$,
\item $K^t(u|v) \geq n-c_1 \log n - c_2$,
\end{itemize}
where $c_1$ 
is a nonnegative constant independent
of $t,n$, and $c_2$
is a nonnegative constant depending on $t$
but not on $n$.
\end{lemma}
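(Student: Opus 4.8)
The plan is to let $u$ be information-theoretically simple given $n$ but computationally hard to produce within time $t$, and to let $v$ be random yet useless for speeding up the production of $u$.

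First I fix a computable time bound $t'$ that dominates the running time, on input $n$, of the following brute-force procedure: simulate for $t(n)$ steps every program of $U'$ of length at most $n$ on every auxiliary string of length $n$, thereby recording, for every pair $w,v$ of strings of length $n$, whether some such program outputs $w$ from $v$. Crucially $t'$ depends only on $t$ and $n$, not on the thresholds introduced next, so there is no circularity. Then set $c_1 := 2$, let $c_2$ be a large constant depending on $t$ but not on $n$ (pinned down below), and put $m := n - c_1 \log n - c_2$. Define $u$ to be the lexicographically least string of length $n$ with $K^{t'}(u|n) \geq n-1$; such $u$ exists because fewer than $2^{n-1}$ programs have length at most $n-2$, and it satisfies $K(u|n) \leq |t'| + O(1) \leq c_2$, since $u$ is described from $n$ by a fixed search routine together with a program for $t'$. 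This gives the third bullet.

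Next I show that $u$ is \emph{unpopular}: the set $B := \{v : |v| = n,\ K^t(u|v) < m\}$ has fewer than $2^n / n^{c_1}$ elements. Suppose not. Counting pairs $(w,v)$ of length-$n$ strings with $K^t(w|v) < m$: for each $v$ there are fewer than $2^m$ admissible $w$, so $\sum_w |\{v : K^t(w|v) < m\}| < 2^{n+m}$, whence at most $2^{n+m}/(2^n/n^{c_1}) = 2^m n^{c_1} = 2^{n-c_2}$ strings $w$ can be popular (the factor $n^{c_1}$ cancels the $2^{-c_1 \log n}$ in $2^m$, which is exactly why $m$ is chosen this way). If $u$ were popular it would occur among these at most $2^{n-c_2}$ strings, and these are enumerable from $n$ within time $t'(n)$ by the brute-force procedure, so specifying $u$ by its rank in that enumeration gives $K^{t'}(u|n) \leq n - c_2 + O_t(1)$; taking $c_2$ larger than this $O_t(1)$ term by at least $2$ yields $K^{t'}(u|n) < n-1$, contradicting the choice of $u$. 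Hence $|B| < 2^n/n^{c_1}$, which is the fourth bullet for every $v \notin B$.

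Finally I choose $v$ by a union bound over strings of length $n$: fewer than $2^{n-c_1}$ of them have $K(v) < n - c_1$, fewer than $2^{n-c_2}$ have $K(v|u) < n - c_2$, and fewer than $2^n/n^{c_1}$ lie in $B$; since $2^{-c_1} + 2^{-c_2} + n^{-c_1} < 1$ for $c_1 = 2$, $c_2 \geq 2$ and all sufficiently large $n$, some $v$ of length $n$ avoids all three bad sets, and then $(u,v)$ satisfies all four bullets. The finitely many small $n$ are absorbed by enlarging $c_2$: once $m \leq 0$ the fourth bullet is vacuous, $K(u|n) \leq n + O(1) \leq c_2$, and the first two bullets hold by a trivial count. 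I expect the main obstacle to be exactly this dovetailing — pinning down $t'$ before $c_2$ and $c_2$ before $m$, while still running the enumeration that witnesses $K^{t'}(u|n) < n-1$ inside the time budget $t'(n)$ — together with arranging the counting so that the bound lands precisely at $n - c_2$ rather than something weaker.
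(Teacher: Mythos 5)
Your proof is essentially correct, but it takes a genuinely different route from the paper. The paper first fixes a random $v$ (with $K(v|n)\geq n$), then constructs $u$ by diagonalizing against all programs run for $t'(n)$ steps (so $K^{t'}(u|n)\geq n$ while $K(u|n)\leq c_2$), and obtains the crucial fourth bullet $K^t(u|v)\geq n-c_1\log n-c_2$ by invoking the \emph{time-bounded symmetry of information} (Longpr\'e; Exercise 7.1.12 in Li--Vit\'anyi), splitting $K^{t'}(v,u|n)$ two ways. You instead construct $u$ first (time-$t'$-incompressible given $n$, yet of constant complexity given $n$), and replace the appeal to symmetry of information by a direct counting (``unpopularity'') argument: at most $2^{n-c_2}$ strings $w$ of length $n$ can be produced within time $t$ by a program of length below $m=n-c_1\log n-c_2$ from at least a $n^{-c_1}$ fraction of the $v$'s, and since these popular strings are enumerable from $n$ within time $t'(n)$, your $u$ cannot be among them; a union bound then yields a $v$ that is random, random given $u$, and outside the bad set $B$. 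In effect you reprove the needed instance of time-bounded symmetry of information from scratch, which makes the argument more self-contained (and explains the $c_1\log n$ term concretely, with an explicit $c_1=2$), whereas the paper's proof is shorter by citing the known lemma. One point you gloss over: the program that specifies $u$ by its rank among the popular strings must know the threshold $m$, hence the constant $c_2$, so the additive term in $K^{t'}(u|n)\leq n-c_2+O_t(1)$ is not independent of $c_2$ as written; you should either encode $c_2$ explicitly (costing an extra $O(\log c_2)$ bits) and then choose $c_2$ with $c_2-O(\log c_2)-O_t(1)\geq 2$, or use an equivalent fixed-point choice. Likewise $t'$ should be taken large enough to cover the post-processing (computing popularity counts and selecting by rank) and the universal-machine simulation overhead, not just the raw brute-force simulation. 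Both are routine repairs and do not affect the soundness of your approach.
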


\begin{proof}
Fix an integer $n$.
There is a $v$ of length $n$ such that
$K(v|n) \geq n$ by simple counting (there are $2^n$ strings of length $n$
and at most $2^n-1$ programs of length less than $n$). 
If we have a program for $v$ then we can
turn it into a program for $v$ ignoring conditional information
by adding a constant number of bits. Hence, $K(v)+c \geq K(v|n)$ for some
nonnegative constant $c$. Therefore, for large enough nonnegative 
constant $c_1$ we have
\[
K(v) \geq n-c_1.
\] 
Let $t$ be a computable time bound and let the computable
time bound $t'$ be large enough with respect to $t$
so that the arguments
below hold. 
Use the reference universal Turing machine $U'$ with input $n$
 to run all programs of length
less than $n$ for $t'(n)$ steps. Take the least string 
$u$ of length $n$ not occurring
as an output among the halting programs. Since there are at
most $2^n-1$ programs as above, and $2^n$ strings of length $n$ there is
always such a string $u$. 
By construction $K^{t'}(u|n) \geq n$ and for a large enough 
constant $c_2$ also
\[
K(u|n) \leq c_2, 
\]
where $c_2$ depends on $t'$ (hence $t$) but not on $n,u$.
Since $u$ in the conditional only supplies $c_2$ bits
apart from its length $n$ we have 
\[
K(v|u) \geq K(v|n)-K(u|n) \geq n-c_2.
\]
This implies also that $K^{t'} (v|u) \geq n-c_2$. Hence,
\[
2n-c_2 \leq K^{t'}(u|n) + K^{t'} (v|u).
\]
Now we use the
time-bounded symmetry of algorithmic information 
\cite{Lo86} (see also 
\cite{LiVi08}, Exercise 7.1.12) where $t$ is given and  $t'$ is choosen
in the standard proof of
the symmetry of algorithmic information \cite{LiVi08}, Section 2.8.2
(the original is due to L.A. Levin and A.N. Kolmogorov
in \cite{ZL70}), so that the statements below hold.
(Recall also
that for large enough $f$, 
$K^f(v|u,n)=K^f(v|u)$ and $K^f(u|v,n)=K^f(u|v)$ since in the
original formulas $n$ is
present in each term.) Then,
\[
K^{t'}(u|n) + K^{t'} (v|u) - c_1 \log n 
\leq  K^{t'}(v,u|n), 
\]
with the constant $c_1$ large enough and independent of
$t,t',n,u,v$.
For an appropriate choice of $t'$ with respect to $t$ it is easy to see 
(the simple side of the time-bounded symmetry of algorithmic information)
that
\[ 
K^{t'}(v,u|n) 
\leq  K^t(v|n)+K^t(u|v).
\]
Since $K^t(v|n) \geq K(v|n)  \geq n$ we obtain 
$K^t(u|v) \geq n - c_1 \log n - c_2$.
\end{proof}

A similar but tighter result can be obtained from  \cite{BT09}, Lemma 7.7.

\begin{lemma}\label{lem.xor}
For every length $n$ and computable time bound $t$ (provided $t(n) \geq cn$ for
a large enough constant $c$),
there exist strings $v$ and $w$ of length $n$ such that
\begin{itemize}
\item $K(v) \geq n-c_1$,
\item $E(v,w) \leq c_3$,
\item $E^t(v,w) \geq n- c_1 \log n -c_3$, 
\end{itemize}
where the nonnegative constant $c_3$  depends on $t$
but not on $n$ and the nonnegative constant $c_1$ is independent
of $t,n$.
\end{lemma}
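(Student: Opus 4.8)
The plan is to build $w$ directly from the strings $u,v$ furnished by Lemma~\ref{lem.1}, taking $w = u \oplus v$, the bitwise XOR. The only structural facts I will use are that each of $u,v,w$ is computable from the other two by a linear-time operation (since $u = v\oplus w$, $v = u\oplus w$, $w = u\oplus v$), and that either of $v,w$ determines the common length $n$. Intuitively, $w$ is a ``scrambled'' copy of $u$ by the pad $v$: it is information-theoretically close to $v$ (because the pad is itself reconstructible from $n$ alone), yet any \emph{fast} reconstruction of one from the other would yield a fast reconstruction of $u$ from $v$, which Lemma~\ref{lem.1} forbids.

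The two easy items come first. The bound $K(v)\ge n-c_1$ is copied verbatim from Lemma~\ref{lem.1}. For $E(v,w)\le c_3$: given $v$ we compute $n=|v|$, run the $\le c_2$-bit program witnessing $K(u|n)\le c_2$ to obtain $u$, and XOR with $v$ to get $w$; hence $K(w|v)\le c_2+O(1)$, and symmetrically $K(v|w)\le c_2+O(1)$, so $E(v,w)=\max\{K(v|w),K(w|v)\}\le c_2+O(1)$.

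For the lower bound I would use $E^t(v,w)\ge K^t(w|v)$ and argue contrapositively. A $t$-time program for $w$ given $v$, followed by a constant-size instruction to XOR its output with $v$, is a program for $u$ given $v$ running in time roughly $t(n)+O(n)$. The point is that the witness strings should be taken from Lemma~\ref{lem.1} instantiated not at $t$ but at an enlarged computable time bound $\hat t$ (for instance $\hat t(n)=2t(n)$): the hypothesis $t(n)\ge cn$ guarantees $t(n)+O(n)\le\hat t(n)$, so the composition shows $K^{\hat t}(u|v)\le K^t(w|v)+O(1)$. Since Lemma~\ref{lem.1} applied with $n$ and $\hat t$ gives the same $u,v$ together with $K^{\hat t}(u|v)\ge n-c_1\log n-c_2$, we obtain $K^t(w|v)\ge n-c_1\log n-c_2-O(1)$. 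Taking $c_3$ to be $c_2$ plus the larger of the $O(1)$ overheads completes the proof; note that $c_1$ is exactly the constant of Lemma~\ref{lem.1} (independent of $t,n$), while $c_3$ inherits dependence on $\hat t$, hence on $t$, but not on $n$.

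The one genuinely delicate point is this time-bound bookkeeping: the conclusion is phrased for $E^t$ with the \emph{given} $t$, but the witnesses must come from Lemma~\ref{lem.1} at the \emph{larger} bound $\hat t$, and one must verify that the constant-factor and additive-$O(n)$ simulation overheads of composing a $t$-time computation of $w$ with the XOR really do fit inside $\hat t$. That verification is precisely what the hypothesis $t(n)\ge cn$ is there to supply; the remaining estimates are routine.
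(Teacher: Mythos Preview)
Your proposal is correct and follows essentially the same route as the paper's proof: define $w=v\oplus u$, invoke Lemma~\ref{lem.1} with the doubled time bound $2t$ in place of $t$, read off $K(v)\ge n-c_1$ and $E(v,w)\le K(u|n)+O(1)\le c_3$, and obtain the lower bound on $E^t(v,w)$ from $K^{2t}(u|v)\le K^t(w|v)+O(1)$. Your emphasis on the time-bound bookkeeping (and the role of the hypothesis $t(n)\ge cn$) is exactly the point the paper's argument hinges on.
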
 
\begin{proof}
Let strings $u,v$ and constants $c_1,c_2$ be as in Lemma~\ref{lem.1} 
using $2t$ instead of $t$,
and the constants $c',c'',c_3$ are large enough for the
proof below.
By Lemma~\ref{lem.1}, we have $K^{2t} (u|v) \geq n-c_1 \log n - c_2$
with $c_2$ appropriate for the time bound $2t$. 
Define $w$ by $w=v \oplus u$
where $\oplus$ denotes the bitwise XOR. Then,
\[
E(v,w) \leq K(u|n)+c' \leq c_3,
\]  
where the nonnegative constant $c_3$ depends on $2t$ (since $u$ does)
but not on $n$ and the constant $c'$ is independent of $t,n$.
We also have $u=v \oplus w$ so that (with the time bound $t(n)\geq cn$ for
$c$ a large enough constant independent of $t,n$)
\begin{eqnarray*}
n-c_1 \log n - c_2 & \leq & K^{2t} (u|v) 
\\& \leq & K^t(w|v) +c'
\\&\leq& \max \{K^t(v|w), K^t(w|v)\}+c''
\\&=& E^t(v,w)+c'',
\end{eqnarray*} 
where the nonnegative constants $c',c''$ are independent of $t,n$.
\end{proof}

\begin{lemma}\label{lem.lower}
The function $e$ is not lower semicomputable.
\end{lemma}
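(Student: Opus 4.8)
The plan is to derive a contradiction from the assumption that $e$ is lower semicomputable, using Lemma~\ref{lem.xor} to exhibit pairs of strings on which $e$ is small (close to $0$) but any time-bounded under-approximation to $E$ is forced to be large.

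The plan is to assume that $e$ is lower semicomputable and convert the assumed monotone approximation into a computable time bound that collides with Lemma~\ref{lem.1}. So suppose $\phi(x,y,k)$ is a rational-valued computable function, nondecreasing in $k$, with $\lim_k \phi(x,y,k) = e(x,y)$; in particular $\phi(x,y,k) \le e(x,y)$ for all $k$, so ``$\phi(v,w,k)$ never exceeds $1/2$'' is a sound certificate that $e(v,w) \le 1/2$. The point is that for a fixed string $v$ of length $n$, the partners $w$ of length $n$ with $e(v,w) \le 1/2$ are \emph{few}: any such $w$ satisfies $K(w\mid v) \le E(v,w) \le \frac12\max\{K(v),K(w)\} \le n/2 + O(\log n)$, so there are at most $2^{n/2 + O(\log n)}$ of them. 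Hence $\phi$ must eventually push its value past $1/2$ on all but this negligible fraction of the partners of $v$, and since $\{0,1\}^n$ is finite it does so by \emph{some} finite stage; since $\phi$ is computable, that stage and the computation realizing it are effective, and taking a bound over all $2^n$ choices of $v$ of length $n$ gives a bound depending on $\phi$ alone.

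Concretely I would argue as follows. (i) Fix $\phi$. For $v$ of length $n$ let $K_0(v)$ be the least stage $k$ such that at least $2^n - 2^{2n/3}$ strings $w \in \{0,1\}^n$ satisfy $\phi(v,w,k) > 1/2$; by the counting above this is well defined and computable from $v$, and the ``defect set'' $B_v := \{\, w \in \{0,1\}^n : \phi(v,w,K_0(v)) \le 1/2 \,\}$ then has $|B_v| \le 2^{2n/3}$ and contains every $w$ of length $n$ with $e(v,w) \le 1/2$. Let $g$ be a computable function, depending on $\phi$ only, that for every $n$ majorizes the running time of the procedure ``given $v \in \{0,1\}^n$, compute $K_0(v)$, then list $B_v$, then output the element of $B_v$ at a given position''. (ii) Apply Lemma~\ref{lem.1} with the computable time bound $t := g$ to obtain strings $u,v$ of length $n$ with $K(v) \ge n - c_1$, $K(u\mid n) \le c_2$, and $K^{g}(u\mid v) \ge n - c_1 \log n - c_2$. (iii) Put $w := v \oplus u$. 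Since $u$ has a program of length at most $c_2$ given $n = |v| = |w|$, both $K(w\mid v)$ and $K(v\mid w)$ are $\le c_2 + O(1)$, so $E(v,w) \le c_2 + O(1)$ and hence $e(v,w) \le (c_2 + O(1))/(n - c_1) < 1/2$ for $n$ large; thus $w \in B_v$. (iv) But then $u = v \oplus w$ can be computed from $v$ within $g(n)$ steps from the $\le 2n/3 + O(\log n)$ bits naming the position of $w$ inside $B_v$ (recompute $K_0(v)$ and $B_v$, select $w$, and XOR with $v$), so $K^{g}(u\mid v) \le 2n/3 + O(\log n)$, contradicting (ii) as soon as $n$ is large enough.

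The step I expect to require the most care is the coupling of (i) and (ii): the bound $g$ must be squeezed out of $\phi$ \emph{before} it is fed to Lemma~\ref{lem.1}, with no surreptitious dependence of $g$ on that lemma's conclusion, or the construction is circular. This is precisely why $g$ is obtained by maximizing the relevant running times over \emph{all} strings of length $n$ rather than over ``the'' $v$ the lemma will produce; the only cost is that the constant $c_2$ delivered by Lemma~\ref{lem.1} becomes a function of $g$, hence of $\phi$, which is harmless because it is still independent of $n$ and $n$ is chosen last. A second point to handle carefully is that the argument never needs $\phi$ to converge \emph{quickly} on any individual partner $w$ --- only to eventually exceed $1/2$ on the overwhelming majority of them --- so it is the soundness direction $\phi \le e$ together with the finiteness of $\{0,1\}^n$ that carry the argument, while the real obstruction being exploited is the time-bounded incompressibility of Lemmas~\ref{lem.1} and~\ref{lem.xor} rather than any softer form of noncomputability of $e$.
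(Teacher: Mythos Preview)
Your argument is correct and shares the paper's high-level strategy---extract from the assumed lower approximation of $e$ a total computable time bound, feed it into Lemma~\ref{lem.1}, set $w=v\oplus u$, and read off a contradiction---but the mechanism differs. The paper couples the hypothetical lower approximation $e_i$ of $e$ with the canonical upper approximation $E_i$ of $E$: for every length-$n$ pair $(x,y)$ it waits for the first stage $i_{x,y}$ at which $e_{i_{x,y}}(x,y)\ge E_{i_{x,y}}(x,y)/(n+2\log n+c)$, sets $s(n)=\max_{x,y} i_{x,y}$, applies Lemma~\ref{lem.xor} with time bound $s$, and obtains the numeric chain $c_3\ge E(v,w)\ge e(v,w)(n-c_1)\ge e_{s(n)}(v,w)(n-c_1)\ge E^{s}(v,w)(n-c_1)/(n+2\log n+c)\sim n$. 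You instead use a cardinality argument: for each $v$ at most $2^{n/2+O(\log n)}$ partners $w$ have $e(v,w)\le 1/2$, so the approximation must exceed $1/2$ on all but a defect set $B_v$ of size $\le 2^{2n/3}$ by some computable stage; your time bound $g(n)$ is the time to enumerate $B_v$, and the contradiction is $K^{g}(u\mid v)\le 2n/3+O(\log n)$ against $K^{g}(u\mid v)\ge n-c_1\log n-c_2$. Your route never invokes the upper semicomputability of $E$ as a separate ingredient and absorbs Lemma~\ref{lem.xor} inline, at the cost of a little combinatorial bookkeeping; the paper's route is more direct numerically but needs both approximating sequences simultaneously. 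You correctly flag and handle the circularity hazard the paper also addresses, by making $g$ depend only on $n$ via a maximum over all $v\in\{0,1\}^n$, exactly parallel to the paper's $s(n)=\max_{x,y} i_{x,y}$; the only residual care needed is to pad $g$ by the fixed $U'$-simulation overhead and the linear cost of the final XOR, which is routine.
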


\begin{proof}
Assume by way of contradiction that the lemma is false.
Let $e_i$ be a lower semicomputable
function approximation of $e$ such that 
$e_{i+1}(x,y) \geq e_i(x,y)$ for
all $i$ and $\lim_{i \rightarrow \infty} e_i(x,y) = e(x,y)$. 
Let $E_i$ be an upper semicomputable function approximating $E$ such
that $E_{i+1}(x,y) \leq E_i(x,y)$ for all $i$ 
and $\lim_{i \rightarrow \infty} E_i(x,y) = E(x,y)$.
Finally, for $x,y$ are strings of length $n$ let  $i_{x,y}$ denote
the least $i$ such that
\begin{equation}\label{eq.eE}
e_{i_{x,y}}(x,y)  \geq \frac{E_{i_{x,y}}(x,y)}{n+2 \log n +c} \;,
\end{equation}
where $c$ is a large enough constant (independent of $n,i$) such that
$K(z) <  n+2 \log n +c$ for every string $z$ of length $n$
(this follows from the upper bound on $K$, see \cite{LiVi08}). 
Since the function $E$ is upper semicomputable and
the function $e$ is 
lower semicomputable by the contradictory assumption
such an ${i_{x,y}}$ exists.
Define the function $s$ by $s(n)=\max_{x,y \in \{0,1\}^n} \{i_{x,y}\}$.
\begin{claim}\label{claim.es}
The function $s(n)$ is total computable
and 
$E^s (v,w))\geq n-c_1 \log n-c_3$ for some strings $v,w$
of length $n$ and constants $c_1,c_3$ in Lemma~\ref{lem.xor}.
\end{claim}
\begin{proof}
By the contradictory assumption $e$ is lower semicomputable, and
$E$ is upper semicomputable since $K( \cdot | \cdot)$ is. 
Recall also that $e(x,y) > E(x,y)/(n+2 \log n +c)$ for every pair
$x,y$ of strings of length $n$.
Hence for every such pair $(x,y)$ we can compute
$i_{x,y} < \infty$. Since $s(n)$ is the maximum of $2^{2n}$ computable integers,
$s(n)$ is computable as well and total.
Then, the claim follows from Lemma~\ref{lem.xor}.
(If $s(n)$ happens to be too small to apply Lemma~\ref{lem.xor} 
we increase it total computably until it is large enough.)
\end{proof}
\begin{remark}
The string $v$ of length $n$ as defined in the proof of
Lemma~\ref{lem.1} satisfies $K(v|n) \geq n$. Hence $v$
is incomputable \cite{LiVi08}. Similarly this holds for $w=v \oplus u$
(defined in Lemma~\ref{lem.xor}). But above we look for a
function $s(n)$ such that {\em all} pairs $x,y$ of strings of length $n$
(including the incomputable strings $v,w$)
satisfy \eqref{eq.eE} with $s(n)$ replacing
$i_{x,y}$. Since the computable function $s(n)$ does
not depend on the particular strings $x,y$ but only on their
length $n$, we can use it as the computable time bound $t$ in 
Lemmas~\ref{lem.1} and \ref{lem.xor}
to define strings $u,v,w$ of length $n$.

For given strings $x,y$ of length $n$, the value
$E_{i_{x,y}}(x,y)$ is not necessarily equal to $E^s(x,y)$. 
Since $s(n)$ majorises
the $i_{x,y}$'s 
and $E$
is upper semicomputable, we have
$E^s (x,y) \leq E_{i_{x,y}}(x,y)$,
for all pairs $(x,y)$ of strings $x,y$ of length $n$. 
\end{remark}
Since $K(v) \geq n-c_1$
we have  $E(v,w) \geq e(v,w)  (n-c_1) $. 
By the contradictory assumption that $e$
is lower semicomputable we have $e(v,w) \geq 
e^s (v,w)$.
By \eqref{eq.eE} and the definition of $s(n)$
we have
\[
e^s(v,w) \geq  
\frac{E^s(v,w) }{n+2 \log n +c} \;.
\]
Hence, 
\[
E(v,w) \geq \frac{E^s(v,w) (n-c_1)}{n+2 \log n +c} \;.
\] 
But $E(v,w) \leq c_3$ by Lemma~\ref{lem.xor} and 
$E^s(v,w) \geq n-c_1 \log n-c_3$ 
by Claim~\ref{claim.es}, which yields the required contradiction
for large enough $n$.
\end{proof}

\section{The NID is not upper semicomputable}

\begin{lemma}\label{lem.upper}
The function $e$ is not upper semicomputable.
\end{lemma}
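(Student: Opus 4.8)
The plan is to avoid the time-bounded machinery of the previous section and to reduce the assertion directly to the classical fact that $K$ is not computable.

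Suppose, for a contradiction, that $e$ is upper semicomputable. Since $x\mapsto\langle x,x\rangle$ is computable, the diagonal restriction $g(x):=e(x,x)$ is upper semicomputable as well. Evaluating $g$ from the definition of the NID gives $g(x)=e(x,x)=E(x,x)/\max\{K(x),K(x)\}=K(x|x)/K(x)$. The crucial observation is that $K(x|x)$ is a fixed positive constant $d_0$ independent of $x$. This is a property of the reference machine and I would arrange it once and for all: choose $U$ so that a single one-bit program copies its conditional input to the output while every other program begins with the complementary bit; then that one-bit program is the unique program of length at most one, there is no program of length zero, and hence $K(x|x)=1$ for all $x$ (any fixed $d_0\ge1$ would serve as well). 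Consequently $g(x)=d_0/K(x)$ for every $x$.

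The key step is then to invert the reciprocal. Fix a rational-valued computable $\psi(x,k)$, nonincreasing in $k$, with $\lim_k\psi(x,k)=g(x)=d_0/K(x)$; since $g(x)>0$ we may assume $\psi(x,k)>0$ for all $k$. Put $\chi(x,k):=d_0/\psi(x,k)$. Then $\chi$ is rational-valued and computable, it is nondecreasing in $k$ because $\psi$ is nonincreasing, and $\lim_k\chi(x,k)=d_0/g(x)=K(x)$. Hence $K$ would be lower semicomputable. But $K$ is upper semicomputable, being approximated from above by the time-bounded complexities $K^t$, so $K$ would be computable --- contradicting the standard fact that it is not (for instance, otherwise $n\mapsto\min\{x:K(x)\ge n\}$ would be computable while having values of complexity at most $\log n+O(1)$). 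This contradiction establishes the lemma.

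The one delicate point, which I expect to require the most care, is the handling of $K(x|x)$: the argument needs its exact value $d_0$ and not merely $K(x|x)=O(1)$, because $d_0$ is exactly what permits $K(x)$ to be recovered from $g(x)$ at the inversion step; this is why I would pin down the reference universal machine at the start so that $K(x|x)$ is a genuine constant. The remaining ingredients --- preservation of upper semicomputability under composition with a computable function, and the elementary rewriting $\chi=d_0/\psi$ --- are routine.
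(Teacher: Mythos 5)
Your overall route is the same as the paper's: restrict to the diagonal, note $e(x,x)=K(x|x)/K(x)$, invert the monotone upper approximation, and contradict the incomputability of $K$; the inversion step and the closing contradiction are fine as far as they go. The problem is the device you use to make the numerator harmless. You redefine the reference universal machine so that $K(x|x)$ equals an exact constant. But the reference machine $U$ was fixed once and for all in the preliminaries; the function $e$ in the statement of the lemma (the same $e$ treated in Lemma~\ref{lem.lower}) is defined from that $U$. The NIDs of two different universal machines are different functions, and ``not upper semicomputable'' is not a property that automatically transfers from one to the other, so what you have proved is the lemma for a specially constructed variant of $e$, not for $e$ itself. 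This is the genuine gap.

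Fortunately the exact value of $K(x|x)$ is not needed, only its boundedness. For the paper's fixed $U$ one has $1\le K(x|x)\le c$ for some constant $c$ (the lower bound because the empty program cannot lie in the prefix-free domain of a universal machine). Hence $e(x,x)>0$, and if $\psi(x,k)$ decreases to $e(x,x)$, then $\chi(x,k)=1/\psi(x,k)$ is computable, nondecreasing in $k$, and converges to $h(x)=K(x)/K(x|x)$, a lower semicomputable function satisfying $K(x)/c\le h(x)\le K(x)$. This $h$ need not be $K$, so you cannot conclude that $K$ is computable as you do; but $h$ is an unbounded lower semicomputable lower bound on $K$, and no such function exists: given $n$, dovetail the approximations of $h$ and output the first $x$ found with approximation exceeding $n$; then $K(x)>n$ while $K(x)\le K(n)+O(1)=O(\log n)$, a contradiction for large $n$ (this is the same counting idea as in your parenthetical remark). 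With this replacement your argument proves the lemma for the paper's reference machine, and it then coincides in substance with the paper's proof, which uses the shorthand $e(x,x)=1/K(x)$ for the same reduction.
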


\begin{proof}
\rm
It is easy to show that $e(x,x)$ (and hence $e(x,y)$
in general) is not upper semicomputable. 
For simplicity
we use $e(x,x) = 1/K(x)$.
Assume that the function
$1/K(x)$ is upper semicomputable 
Then, $K(x)$ is lower semicomputable.
Since $K(x)$ is also upper semicomputable, it
is computable. But this violates the known fact \cite{LiVi08}
that $K(x)$ is incomputable.
\end{proof}

\section{Open Problem}

A subset of $\mathcal{N}$ is
called $n$-computably enumerable ($n$-c.e.) if it is 
a Boolean combination of $n$ computably enumerable
sets. Thus, the $1$-c.e. sets are the computably enumerable
sets, the $2$-c.e. sets (also called d.c.e.) the differences
of two c.e. sets, and so on.
The $n$-c.e. sets are referred to as the
{\em difference hierarchy} over the c.e. sets.
This is an effective analog of a classical hierarchy
from descriptive set theory.
Note that a set is $n$-c.e. if it has a
computable approximation that changes at most $n$ times.

We can extend the notion of $n$-c.e. set to a notion that
measures the number of fluctuations of a function as follows:
For every $n\geq 1$,
call $f:\mathcal{N}\rightarrow \mathcal{R}$
{\em $n$-approximable} if there is a rational-valued
computable approximation $\phi$ such that
$\lim_{k\rightarrow\infty} \phi(x,k) = f(x)$ and such that for every $x$,
the number of $k$'s such that
$\phi(x,k+1) - \phi(x,k)<0$ is bounded by~$n-1$.
That is, $n-1$ is a bound on the number of fluctuations
of the approximation.
Note that the $1$-approximable functions are precisely the
lower semicomputable ($\Sigma^0_1$) ones (zero fluctuations).
Also note that a set $A\subseteq\mathcal{N}$ is $n$-c.e.
if and only if the characteristic function of $A$
is $n$-approximable.

{\bf Conjecture} For every $n\geq 1$, the normalized information
distance $e$ is not $n$-approximable.

\section{Acknowledgement}
Harry Buhrman 
pointed out an apparent circularity in an early version of the proof
of Claim~\ref{claim.es}. A referee pointed out
that an early version of Lemma~\ref{lem.upper} was incorrect and
gave many other useful comments that improved the paper.


\begin{thebibliography}{99}

\bibitem{BT09}
B. Bauwens and S.A. Terwijn,
Notes on sum-tests and independence tests, {\em Theory of Computing Systems},
To appear.

\bibitem{BGLVZ98}
C.H. Bennett, P. G\'acs, M. Li, P.M.B. Vit\'anyi, and W. Zurek,
Information distance, {\em IEEE Trans. Inform. Theory},
44:4(1998), 1407--1423.

\bibitem{CMRSV02}
A.V. Chernov, An.A. Muchnik, A.E. Romashchenko, A.K. Shen, N.K. Vereshchagin,
Upper semi-lattice of binary strings with the relation
``$x$ is simple conditional to $y$'',
{\em Theor. Comput. Sci.},
271:1--2(2002), 69--95.


\bibitem{CVW03}
R. Cilibrasi, P.M.B. Vitanyi, R. de Wolf,
Algorithmic clustering of music based on string compression,
{\em Computer Music J.}, 28:4(2004), 49-67.

\bibitem{CV04}
R. Cilibrasi, P.M.B. Vitanyi, Clustering by compression,
{\em IEEE Trans. Inform. Theory}, 51:4(2005), 1523- 1545.

\bibitem{Ga01}
W. Gasarch, Email of August 12, 2001, to author PMBV.

\bibitem{KLRWHL07}
E. Keogh, S. Lonardi, C.A. Ratanamahatana, L. Wei, H.S. Lee, and J. Handley,
Compression-based data mining of sequential data,
{\em Data Mining and Knowledge Discovery},
14:1(2007), 99--129.

\bibitem{Ko65}
A.N. Kolmogorov,
{Three approaches to the quantitative definition of information},
{\em Problems Inform. Transmission} 1:1(1965), 1--7.



\bibitem{LBCKKZ01}
M. Li, J. Badger, X. Chen, S. Kwong,
P. Kearney, and H. Zhang,
An information-based sequence distance and its application
to whole mitochondrial genome phylogeny,
{\em Bioinformatics}, 17:2(2001), 149--154.



\bibitem{Li03}
M. Li, X. Chen, X.~Li, B.~Ma, P.M.B.~Vit\'anyi.
The similarity metric, {\em IEEE Trans. Inform. Theory}, 50:12(2004),
3250- 3264.


\bibitem{LiVi08}
M.~Li and P.M.B.~Vit\'anyi.
{\em An Introduction to Kolmogorov Complexity
and its Applications}, Springer-Verlag, New York, 3rd Edition, 2008.

\bibitem{MV01}
An.A. Muchnik and N.K. Vereshchagin,
Logical operations and {K}olmogorov complexity {II},
Proc. 16th IEEE Conf. Comput. Complexity, 2001, 256--265.

\bibitem{Lo86}
L. Longpr\'e, {\em Resource bounded {Kolmogorov}
complexity, a link between computational
complexity and information theory}, Comput. Sci. Dept., Cornell Univ.,
PhD thesis, 1986.

\bibitem{SV02}
A.K. Shen and N.K. Vereshchagin,
Logical operations and {K}olmogorov complexity,
{\em Theor. Comput. Sci.},
271:1--2(2002), 125--129.


\bibitem{VV02}
N.K. Vereshchagin and M.V. Vyugin,
Independent minimum length programs to translate between given strings,
{\em Theor. Comput. Sci.},
271:1--2(2002), 131--143.

\bibitem{Vy02}
M.V. Vyugin, Information distance and conditional complexities,
{\em Theor. Comput. Sci.},
271:1--2(2002), 145--150.

\bibitem{Vy03}
M.V. Vyugin, Systems of strings with high mutual complexity,
{\em Problems Inform. Transmission}, 39:4(2003), 88--92.



\bibitem{ZL70}
A.K. Zvonkin and L.A. Levin,
The complexity of finite objects and the development of the concepts
 of information and randomness by means of the theory of algorithms,
{\em Russian Math. Surveys}, 25(6):83--124, 1970.


\end{thebibliography}
\end{document}